\documentclass[12pt,a4paper]{article}

\usepackage[latin1]{inputenc}
\usepackage[left=2.00cm, right=2.00cm, top=2.00cm]{geometry}
\usepackage{amsmath}
\usepackage{amsthm}
\usepackage{mathtools}
\usepackage{amsfonts}
\usepackage{amssymb}
\usepackage{subcaption}
\usepackage{graphicx}
\usepackage{tikz-cd}


\theoremstyle{plain}
\newtheorem{thm}{Theorem}[section]

\newtheorem{propn}[thm]{Proposition}
\newtheorem{lem}[thm]{Lemma}
\theoremstyle{definition}

\numberwithin{equation}{section}


\newcommand{\C}{\mathbb{C}}
\newcommand{\R}{\mathbb{R}}

\DeclareMathOperator{\Imag}{\text{Im}}

\DeclareMathOperator{\Fix}{\text{Fix}}

\DeclareMathOperator{\Crit}{\text{Crit}}

\DeclareMathOperator{\csch}{\mathrm{csch}}

\newcommand{\CS}{\mathbb{CS}}
\newcommand{\slin}{\mathfrak{sl}}

\title{Unifying the Hyperbolic and Spherical 2-Body Problem with Biquaternions}
\author{Philip Arathoon}
\date{December 2020}
\begin{document}
\maketitle
\begin{abstract}
	The 2-body problem on the sphere and hyperbolic space are both real forms of holomorphic Hamiltonian systems defined on the complex sphere. This admits a natural description in terms of biquaternions and allows us to address questions concerning the hyperbolic system by complexifying it and treating it as the complexification of a spherical system. In this way, results for the 2-body problem on the sphere are readily translated to the hyperbolic case. For instance, we implement this idea to completely classify the relative equilibria for the 2-body problem on hyperbolic 3-space for a strictly attractive potential. 
\end{abstract}
\section*{Background and outline}
The case of the 2-body problem on the 3-sphere has recently been considered by the author in \cite{paper3}. This treatment takes advantage of the fact that \(S^3\) is a group and that the action of \(SO(4)\) on \(S^3\) is generated by the left and right multiplication of \(S^3\) on itself. This allows for a reduction in stages, first reducing by the left multiplication, and then reducing an intermediate space by the residual right-action. An advantage of this reduction-by-stages is that it allows for a fairly straightforward derivation of the relative equilibria solutions: the relative equilibria may first be classified in the intermediate reduced space and then reconstructed on the original phase space.

For the 2-body problem on hyperbolic space the same idea does not apply. Hyperbolic 3-space \(H^3\) cannot be endowed with an isometric group structure and the symmetry group \(SO(1,3)\) does not arise as a direct product of two groups. This prevents us from reducing in stages.

Nevertheless, despite these differences, the sphere and hyperbolic space are both two sides of the same coin. The sphere \(S^3\) is the real affine variety
\begin{equation}
\label{sphere}
u^2+v^2+w^2+z^2=1,
\end{equation}
whereas \(H^3\) is a connected component of 
\begin{equation}\label{1,3}
t^2-x^2-y^2-z^2=1.
\end{equation}
If we complexify the variables these each define the same complex 3-sphere \(\CS^3\) in \(\C^4\). We call \(\CS^3\) the complexification of \(S^3\) and \(H^3\) and refer to \(S^3\) and \(H^3\) as real forms of \(\CS^3\). This is analogous to the notion of real forms and complexifications for vector spaces and Lie groups. To extend this analogy further, one can complexify an analytic Hamiltonian system on a real form to obtain a {holomorphic Hamiltonian system} on the complexification. The original real system appears as an invariant sub-system in the complexified phase space. 

We shall consider the 2-body problem on \(H^3\) and complexify this to obtain a holomorphic Hamiltonian system for the complexified 2-body problem on \(\CS^3\). The group of complex symmetries is the complexification \(SO_4\C\) of \(SO(1,3)\). Pleasingly, this puts us back into the same regime that we had earlier with the \(SO(4)\)-symmetry. In particular, we may take advantage of the \(SL_2\C\times SL_2\C\)-double-cover over \(SO_4\C\) and reduce in stages. We can then apply the exact same analysis for the spherical problem as performed in \cite{paper3}, except now all variables are taken to be complex and one must remember to restrict attention to the invariant hyperbolic real form. This trick allows us to classify the relative equilibria solutions for the hyperbolic problem using the same methods which are used for the spherical case. 

One-parameter subgroups of \(SO(1,3)\) come in four types: elliptic, hyperbolic, loxodromic, and parabolic. The classification of relative equilibria for the 2-body problem on the Lobachevsky plane \(H^2\) was carried out in \cite{jakkuneg,hypisgroup,james}. For each configuration of the two bodies there exists precisely one elliptic and hyperbolic solution up to time-reversal symmetry. We extend this result and show that for each configuration of two bodies in \(H^3\) there exists, up to conjugacy, a circle of loxodromic relative equilibria, including the elliptic and hyperbolic solutions from the \(H^2\) case. There are no parabolic solutions for a strictly attractive potential. Furthermore, we piggy-back on the stability analysis performed in \cite{james} and apply a continuity argument to derive complete results for the reduced stability of the relative equilibria.

The paper is arranged as follows. We first provide a short review of the theory of real forms for holomorphic Hamiltonian systems. The material is distilled from the more complete treatment given in \cite{complex}, but is provided to ensure that the paper remains for the most part self-contained. The complexified 2-body problem is then formulated and conveniently presented in terms of biquaternions. Next, we consider in closer detail the hyperbolic real form. We discuss the hyperbolic symmetry, momentum, and one-parameter subgroups. Finally, we emulate the work in \cite{paper3} and completely classify the hyperbolic relative equilibria by working on an intermediate reduced space for the complexified system. We then derive stability results in the full reduced space.
\section{Real forms for holomorphic Hamiltonian systems}

\subsection{Real-symplectic forms}
A \emph{holomorphic symplectic manifold} is a complex manifold \(M\) equipped with a closed, non-degenerate holomorphic 2-form \(\Omega\). Examples of such manifolds include cotangent bundles of complex manifolds, and coadjoint orbits of complex Lie algebras. Exactly as with real Hamiltonian dynamics we can define a Hamiltonian vector field for a holomorphic function on \(M\) and consider the dynamical system generated by its flow.

A \emph{real structure} \(R\) on a complex manifold \(M\) is an involution whose derivative is conjugate-linear everywhere. If \((M,\Omega)\) is a holomorphic symplectic manifold then a real structure \(R\) is called a \emph{real-symplectic structure} if it satisfies
\[
R^*\Omega=\overline{\Omega}.
\]
If the fixed-point set \(M^R\) is non-empty, then the restriction of \(\Omega\) to \(M^R\) defines a real symplectic form \(\widehat{\omega}_R\). A real structure \(r\) on a complex manifold \(C\) can be lifted to a real-symplectic structure \(R\) on the cotangent bundle by setting
\begin{equation}
\label{cot_lift}
\langle R(\eta),X\rangle=\overline{\langle \eta,r_*X\rangle}
\end{equation}
for \(\eta\in T_x^*C\) and for all \(X\in T_{r(x)}C\). If \(C^r\) is non-empty, then \((T^*C)^R\) is canonically symplectomorphic to the \(T^*C^r\).

\begin{thm}[\cite{complex}]\label{main_thm}
	Let \(f\) be a holomorphic function on a holomorphic symplectic manifold \((M,\Omega)\) equipped with a real-symplectic structure \(R\) with non-empty fixed-point set \(M^R\). If \(f\) is purely real on \(M^R\) then the Hamiltonian flow generated by \(f\) leaves \(M^R\) invariant. Moreover, the flow on \(M^R\) is identically the Hamiltonian flow on \((M^R,\widehat{\omega}_R)\) generated by the restriction \(u\) of \(f\).
\end{thm}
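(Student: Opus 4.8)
\emph{Plan.} Everything should follow from one algebraic identity relating the holomorphic Hamiltonian vector field $X_f$ of $f$, characterised pointwise by $\iota_{X_f}\Omega=df$, to that of the conjugated pullback $g:=\overline{R^*f}=\overline{f\circ R}$. Since $R$ is antiholomorphic and $f$ holomorphic, $g$ is again holomorphic, and both $X_f$ and $X_g$ are well defined because at each point $\Omega$ restricts to a non-degenerate $\C$-bilinear antisymmetric form on the complex tangent space while $df$ and $dg$ are $\C$-linear. The claim I would prove is that $R_*X_f=X_g$. To do so, fix $x$, put $y=R(x)$, take an arbitrary $w\in T_yM$, and write $w=dR_x(w_0)$; since $R$ is an involution we have $dR_y\circ dR_x=\mathrm{id}$, so $w_0=dR_y(w)$. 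Then, using $R^*\Omega=\overline\Omega$, the identity $R(y)=x$, and the fact that complex conjugation commutes with the exterior derivative of a $\C$-valued function,
\begin{align*}
\Omega_y\big(dR_x(X_f|_x),\,w\big)
=\overline{\Omega_x(X_f|_x,\,w_0)}
=\overline{df_x\big(dR_y(w)\big)}
=\overline{d(f\circ R)_y(w)}
=dg_y(w),
\end{align*}
and non-degeneracy of $\Omega_y$ forces $dR_x(X_f|_x)=X_g|_y$, i.e. $R_*X_f=X_g$.

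Next I would specialise to $x\in M^R$. Recall that $M^R$ is a submanifold with $T_xM^R=\ker(dR_x-\mathrm{id})$, and that, $dR_x$ being a conjugate-linear involution, $T_xM=T_xM^R\oplus J\,T_xM^R$, where $J$ is the complex structure. For $x\in M^R$ we have $R(x)=x$, so the identity above reads $dR_x(X_f|_x)=X_g|_x$. On the other hand $g(x)=\overline{f(R(x))}=\overline{f(x)}=f(x)$ because $f$ is real on $M^R$; hence $g=f$ on $M^R$, so $d(g-f)_x$ vanishes on $T_xM^R$, and being $\C$-linear it also vanishes on $J\,T_xM^R$, hence on all of $T_xM$. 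Therefore $X_g|_x=X_f|_x$, so $dR_x(X_f|_x)=X_f|_x$, i.e. $X_f|_x\in T_xM^R$. Since a vector field tangent to a submanifold has a flow preserving that submanifold, this gives the first assertion: the flow of $f$ leaves $M^R$ invariant.

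Finally, with $X_f|_{M^R}$ now a genuine vector field on $M^R$, I would identify it with the $\widehat\omega_R$-Hamiltonian vector field of $u:=f|_{M^R}$, which is real-valued by hypothesis: for $x\in M^R$ and $w\in T_xM^R$,
\begin{align*}
\widehat\omega_R|_x\big(X_f|_x,\,w\big)=\Omega_x\big(X_f|_x,\,w\big)=df_x(w)=du_x(w),
\end{align*}
using $\widehat\omega_R=\Omega|_{M^R}$ and $u=f|_{M^R}$, so non-degeneracy of $\widehat\omega_R$ gives $X_f|_{M^R}=X_u$, and equal vector fields have equal flows. The main obstacle lies entirely in the first step: correctly tracking how complex conjugation interacts with $R^*$ and with $d$, and then using the totally-real decomposition $T_xM=T_xM^R\oplus J\,T_xM^R$ to upgrade the pointwise equality $g=f$ on $M^R$ to the equality $df_x=dg_x$ on the whole tangent space. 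This upgrade is the one place where realness of $f$ along $M^R$ — rather than mere realness of the restriction $u$ — is genuinely used.
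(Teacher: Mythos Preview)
The paper does not prove this theorem; it merely quotes it from \cite{complex} and uses it as a black box. There is therefore no ``paper's own proof'' to compare against.

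That said, your argument is correct and is essentially the standard one. The key identity $R_*X_f=X_g$ with $g=\overline{f\circ R}$ is derived cleanly, and your use of the totally-real splitting $T_xM=T_xM^R\oplus J\,T_xM^R$ to pass from $f=g$ on $M^R$ to $df_x=dg_x$ on all of $T_xM$ is exactly the point where holomorphicity of $f-g$ is needed. The only places worth tightening are cosmetic: you might note explicitly that $M^R$, being the fixed set of a smooth involution, is a closed embedded submanifold, so tangency of $X_f$ really does imply invariance under the flow; and in the final step it is worth recording that $\Omega$ restricted to $T_xM^R$ is automatically real-valued (from $R^*\Omega=\overline\Omega$ applied to $dR_x$-fixed vectors), so that the equation $\widehat\omega_R(X_f,w)=du(w)$ is an identity between real numbers, as it must be.
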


In the situation described by this theorem we may refer to \((M^R,\widehat{\omega}_R,u)\) as a `real form' of the holomorphic Hamiltonian system \((M,\Omega,f)\).

\subsection{Compatible group actions}

Let \(M\) be a complex manifold and \(G\) a complex Lie group acting holomorphically on \(M\). If \(M\) is equipped with a real structure \(R\) then the action is called \emph{$R$-compatible} if there exists a real Lie group structure \(\rho\) on \(G\) which satisfies
\begin{equation}
\label{compatible_condition}
R(g\cdot x)=\rho(g)\cdot R(x)
\end{equation}
for all \(x\) in \(M\) and \(g\) in \(G\). Recall that a real Lie group structure is a real structure on \(G\) which is also a group homomorphism. For such a compatible group action the real subgroup \(G^\rho=\Fix\rho\) acts on \(M^R\). One can show (see for instance \cite[Proposition~2.3]{oshea}) that for \(x\) in \(M^R\)
\begin{equation}
\label{intersection}
T_x(G\cdot x\cap M^R)=T_x(G^\rho\cdot x).
\end{equation}
This fact is significant to the study of relative equilibria solutions for a Hamiltonian system.

Consider a Hamiltonian system \((M,\Omega,f)\), either real or complex, and suppose it admits a symplectic group action by \(G\) which preserves the Hamiltonian. The tuple \((M,\Omega,f,G)\) is a Hamiltonian \(G\)-system. A solution is called a relative equilibria (RE) if it is contained to a \(G\)-orbit. Equivalently, the solution is the orbit of a one-parameter subgroup of \(G\) \cite{marsden_1992}. If we combine this with \eqref{intersection} and Theorem~\ref{main_thm} we have the following.

\begin{propn}\label{RE_trick}
	Let \((M,\Omega,f,G)\) be a holomorphic Hamiltonian \(G\)-system and suppose the action of \(G\) is \(R\)-compatible with respect to a real-symplectic structure \(R\) with non-empty fixed-point set \(M^R\). If \(f\) is real on \(M^R\) then there is a one-to-one correspondence between RE of \((M,\Omega,f,G)\) which are contained to \(N\) and RE of the real form \((M^R,\widehat{\omega}_R,u,G^\rho)\).
\end{propn}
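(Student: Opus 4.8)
The plan is to make the correspondence explicit: it should send a relative equilibrium of \((M,\Omega,f,G)\) lying in \(M^R\) (the set \(N\) of the statement) to the very same curve, now read as a solution of the real form, and vice versa. Everything then reduces to checking that these two assignments are well defined --- that they really land among the relative equilibria of the respective systems --- since, each leaving the underlying solution curve untouched, they are automatically mutually inverse.

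First I would remove the purely dynamical content. Theorem~\ref{main_thm}, whose hypothesis that \(f\) be real on \(M^R\) is exactly what we assume, says that a solution of \((M,\Omega,f)\) which meets \(M^R\) remains in \(M^R\) and there agrees with a solution of \((M^R,\widehat{\omega}_R,u)\), and that every solution of the real form arises so. Thus the solution curves of the holomorphic system contained in \(M^R\) and those of the real form are literally the same curves, and only the label ``relative equilibrium'' must be transported. For that I would use the characterisation recalled just before the proposition in its infinitesimal form: a solution through a point \(x_0\) is an RE of a Hamiltonian \(G\)-system exactly when the Hamiltonian vector field at \(x_0\) is tangent to the orbit \(G\cdot x_0\), because then \(t\mapsto\exp(t\xi)\cdot x_0\), with \(\xi\) chosen so that \(\xi_M(x_0)\) is that vector, is by \(G\)-invariance of the Hamiltonian and uniqueness of integral curves the solution itself.

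The core of the argument is the passage between \(G\)-orbits and \(G^\rho\)-orbits, where \eqref{intersection} does the work. If \(\gamma\) is an RE of the holomorphic system with image in \(M^R\) and \(x_0=\gamma(0)\), then the curve \(\gamma\) lies inside \(G\cdot x_0\cap M^R\), so its velocity \(\dot\gamma(0)=X_f(x_0)=X_u(x_0)\) lies in \(T_{x_0}(G\cdot x_0\cap M^R)\), which by \eqref{intersection} equals \(T_{x_0}(G^\rho\cdot x_0)\). Hence some \(\eta\in\mathfrak g^\rho\) has \(\eta_M(x_0)=X_u(x_0)\), and the curve \(t\mapsto\exp(t\eta)\cdot x_0\) --- which stays in \(M^R\) since \(G^\rho\) preserves \(M^R\), and which solves \((M^R,\widehat{\omega}_R,u)\) by \(G^\rho\)-invariance of \(u\) and uniqueness --- must coincide with \(\gamma\); so \(\gamma\) is an RE of the real form. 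The reverse is immediate: an RE of \((M^R,\widehat{\omega}_R,u,G^\rho)\) has image in a \(G^\rho\)-orbit, hence in a \(G\)-orbit, and is a solution of the holomorphic system lying in \(M^R\) by Theorem~\ref{main_thm}. Composing the two assignments is the identity on curves.

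The step needing most care is the use of \eqref{intersection}: one must ensure \(G\cdot x_0\cap M^R\) is regular enough at \(x_0\) for the cited identity to hold (the hypotheses behind that reference) and that the RE's velocity genuinely lies in \(T_{x_0}(G\cdot x_0\cap M^R)\), not merely in the possibly larger \(T_{x_0}(G\cdot x_0)\cap T_{x_0}M^R\). To avoid invoking \eqref{intersection} as a black box one can argue directly: writing \(\mathfrak g=\mathfrak g^\rho\oplus i\,\mathfrak g^\rho\) from the real Lie-group structure \(\rho\) and using holomorphy of the action, so \((i\eta)_M=J\eta_M\), decompose \(\xi_M(x_0)=(\eta_1)_M(x_0)+J(\eta_2)_M(x_0)\) with \(\eta_1,\eta_2\in\mathfrak g^\rho\); as \((\eta_1)_M(x_0),(\eta_2)_M(x_0)\in T_{x_0}M^R\) while \(T_{x_0}M^R\) is totally real (\(T_{x_0}M^R\cap J\,T_{x_0}M^R=0\), because \(R\) is antiholomorphic), the hypothesis \(\xi_M(x_0)\in T_{x_0}M^R\) forces \((\eta_2)_M(x_0)=0\) and hence \(\xi_M(x_0)\in T_{x_0}(G^\rho\cdot x_0)\). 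Either route closes the gap.
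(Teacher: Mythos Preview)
Your proposal is correct and follows exactly the route the paper indicates: the paper does not give a separate proof but simply says to combine Theorem~\ref{main_thm} with \eqref{intersection}, which is precisely what you do. Your careful discussion of why the velocity lands in \(T_{x_0}(G\cdot x_0\cap M^R)\) rather than merely in \(T_{x_0}(G\cdot x_0)\cap T_{x_0}M^R\), and your alternative direct argument via the decomposition \(\mathfrak g=\mathfrak g^\rho\oplus i\mathfrak g^\rho\) and total reality of \(T_{x_0}M^R\), go beyond what the paper spells out but are welcome elaborations of the same idea.
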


\subsection{Holomorphic momentum maps}

A holomorphic group action of a complex group \(G\) on a holomorphic symplectic manifold \((M,\Omega)\) is Hamiltonian if it admits a holomorphic momentum map 
\(
J\colon M\rightarrow\mathfrak{g}^*
\)
satisfying
\(
\langle J(x),\xi\rangle=H_\xi(x)
\)
for each \(x\) in \(M\) and for all \(\xi\) in the Lie algebra \(\mathfrak{g}\) of \(G\). Here \(H_\xi\) is a holomorphic function whose Hamiltonian vector field is that generated by \(\xi\). If \(\rho\) is a real Lie group structure on \(G\) and \(R\) a real-symplectic structure on \(M\) then the momentum map \(J\) will be called \emph{$R$-compatible} with respect to \(\rho\) if 
\begin{equation}
\label{mu_compatible}
J\circ R=\overline{\rho^*}\circ J.
\end{equation}
The map \(\overline{\rho^*}\) is the \emph{conjugate-adjoint} to \(\rho_*=D\rho_e\) defined by
\begin{equation}
\label{conj_adjoint}
\langle\overline{\rho^*}\eta,\xi\rangle=\overline{\langle\eta,\rho_*\xi\rangle}
\end{equation}
for each \(\eta\) in \(\mathfrak{g}^*\) and for all \(\xi\) in \(\mathfrak{g}\). 

If \(G\) is connected, the \(R\)-compatibility of \(J\) implies the action of \(G\) on \(M\) is also \(R\)-compatible in the sense of \eqref{compatible_condition}. In this case, the symplectic action of \(G^\rho\) on \(M^R\) is also Hamiltonian. Indeed, observe that \(J(x)\) belongs to \(\Fix\overline{\rho^*}\) for \(x\) in \(M^R\). This set consists of all complex-linear forms on \(\mathfrak{g}\) which are real on \(\mathfrak{g}^\rho=\Fix\rho_*\). We therefore have an identification of \(\Fix\overline{\rho^*}\) with \(\left(\mathfrak{g}^\rho\right)^*\). In this way the restriction
\begin{equation}\label{real_momentum}
\widehat{J}\coloneqq J|_{M^R}\colon M^R\longrightarrow\Fix\overline{\rho^*}\cong\left(\mathfrak{g}^\rho\right)^*
\end{equation}
is the momentum map for the action of \(G^\rho\) on \((M^R,\widehat{\omega}_R)\).

\section{The complexified 2-body problem}
\subsection{Biquaternions}
Let \(\{1,I,J,K\}\) denote the standard basis for the real algebra of quaternions. A biquaternion is a linear combination
\begin{equation}\label{biq}
q=u1+vI+wJ+zK
\end{equation}
where \(u,v,w,z\) each belong to \(\C\). As a complex algebra this admits a representation by identifying \(q\) with the matrix
\begin{equation}
\label{matrix}
Q=\begin{pmatrix}
u+iv & w+iz\\ -w+iz & u-iv
\end{pmatrix}.
\end{equation}
The biquaternions are a composition algebra over \(\C\), meaning there exists a complex quadratic form \(||~\cdot~||^2\) which satisfies \(||q_1q_2||^2=||q||^2||q_2||^2\) for all biquaternions \(q_1\) and \(q_2\). With respect to the matrix representation this quadratic form corresponds to the determinant. Observe that the determinant of \(Q\) is 
\[u^2+v^2+w^2+z^2.\]
We may therefore identify the biquaternions with matrices \(Q\) in \(M_2(\C)\) whose quadratic form is \(\det Q\), or with vectors \(q=(u,v,w,z)^T\) in \(\C^4\) whose quadratic form is the standard \(q^Tq\). We will not always specify whether a biquaternion \(q\) is understood to mean an element of \(M_2(\C)\) or \(\C^4\). In most cases it should be clear from the context or not make any difference. In any case, using the polarization identity we shall denote the symmetric, complex bilinear form which gives rise to \(||~\cdot~||^2\) by \(\langle~,~\rangle\).

The set of biquaternions \(q\) with \(||q||^2=1\) forms a group under multiplication. With respect to the matrix representation this corresponds to the special linear group \(SL_2\C\) of matrices with \(\det Q=1\). On the other hand, it also corresponds to the affine variety in \(\C^4\) of vectors satisfying \(q^Tq=1\). This is the complex 3-sphere \(\CS^3\) defined in \eqref{sphere}. Multiplication on the left and right by \(SL_2\C\) on \(M_2(\C)\) preserves the determinant. This provides an action of \(SL_2\C\times SL_2\C\) on \(M_2(\C)\) which we shall denote by
\begin{equation}
\label{lr_action}
(g_1,g_2)\cdot Q=g_1Qg_2^{-1}.
\end{equation}
This action preserves the complex quadratic form on \(\C^4\cong M_2(\C)\), and consequently, establishes the well-known double cover of \(SL_2\C\times SL_2\C\) over \(SO_4\C\). We note that this action preserves \(\CS^3\). Indeed, since \(\CS^3\) may be identified with the group \(SL_2\C\), this action is just left and right multiplication of the group on itself. 
\subsection{The hyperbolic real structure}
We shall be interested in the following real structure \(r\) defined on \(\CS^3\)
\begin{equation}\label{defn_r}
r\colon (u,v,w,z)\longmapsto(\overline{u},-\overline{v},-\overline{w},-\overline{z}).
\end{equation}
According to the matrix representation in \eqref{matrix} this real structure corresponds to taking the conjugate-transpose of \(Q\). The fixed-point set is therefore the set of Hermitian \(2\times 2\) matrices with unit determinant. Alternatively, the fixed-point set consists of the points \((t,ix,iy,iz)\) for \(t,x,y,z\) real numbers satisfying \eqref{1,3}. The solution set has two connected components corresponding to \(t\ge 1\) and \(t\le -1\). We shall only be interested in the \(t\ge 1\) component and write this as \(H^3\), noting that this is the hyperboloid model of hyperbolic-3 space.

The action of \(SL_2\C\times SL_2\C\) in \eqref{lr_action} is \(r\)-compatible with respect to the real Lie group structure
\begin{equation}\label{defn_rho}
\rho(g_1,g_2)=(g_2^{-\dagger},g_1^{-\dagger}).
\end{equation}
The fixed-point set of \(\rho\) is a conjugate-diagonal copy of \(SL_2\C\) which acts by
\begin{equation}\label{hyperbolic_action}
g\cdot Q=gQg^\dagger.
\end{equation}
The hyperboloid \(H^3\subset\R^4\) is invariant with respect to this action. Therefore, the action preserves the indefinite orthogonal form of signature \((1,3)\). Incidentally, this establishes the double cover of \(SL_2\C\) over \(SO(1,3)\).

\subsection{The problem setting}
Consider the holomorphic symplectic manifold \(T^*\CS^3\). Thanks to the complex bilinear form on \(\C^4\) the complex tangent spaces of \(\CS^3\) may be identified with their duals. In this way the cotangent bundle \(T^*\CS^3\) may be identified with the set
\begin{equation}
\label{cot_bundle_complex}
\left\{(q,p)\in \C^4\times\C^4~|~\langle q,p\rangle=0,~||q||^2=1
\right\}.
\end{equation}
The phase space for the complexified 2-body problem is \(T^*\CS^3\times T^*\CS^3\). Strictly speaking the diagonal collision set should be removed, however we will not reflect this in our notation. We consider a holomorphic Hamiltonian
\begin{equation}
\label{ham_complex}
H(q_1,p_1,q_2,p_2)=-\frac{||p_1||^2}{2m_1}-\frac{||p_1||^2}{2m_1}+V\left(\langle q_1,q_2\rangle\right)
\end{equation}
for \(V\colon\C\rightarrow\C\) some holomorphic function which we call the \emph{potential}. 

We now take the product of the real structure \(r\) on \(\CS^3\times\CS^3\) and lift this to a real-symplectic structure \(R\) on the cotangent bundle. The hyperbolic phase space \(T^*H^3\times T^*H^3\) may be identified with a connected component of \(\Fix R\). Restricting the Hamiltonian to this component gives
\begin{equation}
\label{ham_hyperbolic}
H(q_1,p_1,q_2,p_2)=\frac{|p_1|^2}{2m_1}+\frac{|p_1|^2}{2m_1}+V\left(\cosh\psi\right).
\end{equation}
Here \(|p|^2\) denotes the modulus of the covector \(p\in T_q^*H^3\) with respect to the hyperbolic metric on \(H^3\). Recall that this metric is inherited by the negative of the ambient Minkowski metric of signature \((1,3)\) on \(\R^4\). This explains the rather unfortunate presence of what appears to be negative kinetic energy in the holomorphic Hamiltonian. 

For \(q_1\) and \(q_2\) belonging to \(H^3\subset\C^4\) the inner product \(\langle q_1,q_2\rangle\) is equal to \(\cosh\psi\) where \(\psi\) is the hyperbolic distance between \(q_1\) and \(q_2\). For the case of gravitational attraction we choose
\begin{equation}
\label{potential}
V(z)=-m_1m_2\frac{z}{\sqrt{z^2-1}}
\end{equation}
which corresponds to the potential \(-m_1m_2\coth\psi\). In this case the Hamiltonian \(H\) is purely real on the real-symplectic form \(\Fix R\) and so we may apply Theorem~\ref{main_thm} to the holomorphic Hamiltonian system on \(T^*\CS^3\times T^*\CS^3\). It follows that the Hamiltonian system for the 2-body problem on hyperbolic 3-space occurs as a real form of the holomorphic Hamiltonian system described above.
\subsection{Symmetry and momentum}
The momentum map for the cotangent lift of left-multiplication on a group is right-translation of a covector to the identity \cite{bigstages}. Since \(\CS^3\) may be identified with the group \(SL_2\C\) it follows that the momentum map for the action of left-multiplication on phase space is the \emph{total left-momenta}
\begin{equation}
L_{tot}=L_1+L_2
\end{equation}
where \(L_k=p_kq_k^{-1}\) denotes the \emph{left-momentum} of the \(k\)\textsuperscript{th}-particle. Likewise, for the case of right-multiplication the momentum is left-translation to the identity. The momentum map for right-multiplication on phase space is the \emph{total right-momenta}
\begin{equation}
R_{tot}=R_1+R_2
\end{equation}
where \(R_k=q_k^{-1}p_k\) denotes the \emph{right-momentum} of the \(k\)\textsuperscript{th}-particle. The momentum map for the full group action of \(SL_2\C\times SL_2\C\) is therefore
\begin{equation}
\label{momentum_map}
J\colon T^*\CS^3\times T^*\CS^3\longrightarrow\slin_2\C^*\times\slin_2\C^*;\quad(q_1,p_2,q_2,p_2)\longmapsto(L_{tot},-R_{tot}).
\end{equation}
Notice that in the second factor we have negated the total right-momenta. This is because for the action in \eqref{lr_action} we have the inverse of right-multiplication.

By once again using the complex bilinear form on the biquaternions to identify \(\slin_2\C\subset M_2(\C)\) with its dual, we may express the conjugate-adjoint \(\overline{\rho^*}\) of the real Lie group structure \(\rho\) as
\begin{equation}
\overline{\rho^*}(L,R)=(-R^{\dagger},-L^\dagger).
\end{equation} 
The cotangent lift of \(r\) to \(T^*\CS^3\) is given by \((q,p)\mapsto(q^\dagger,p^\dagger)\), where \(q\) and \(p\) are here representing matrices in \(M_2(\C)\). It follows that the momentum map \(J\) is \(R\)-compatible with respect to \(\rho\), and hence, the group action is \(R\)-compatible. Furthermore, according to \eqref{real_momentum} the momentum \(C_{tot}\) for the action of \(SL_2\C\) on \(T^*H^3\times T^*H^3\) may be identified with \(L_{tot}=R_{tot}^\dagger\).

\section{Symmetry and reduction}
\subsection{One-parameter subgroups}
A non-zero element of the Lie algebra \(\slin_2\C\) is conjugate up to the adjoint action to either a semisimple or nilpotent matrix of the form
\begin{equation}
\label{SN}
S=\begin{pmatrix} \eta & 0\\0 & -\eta\end{pmatrix}
\quad\text{or}\quad N=\begin{pmatrix}0 & 1 \\0 & 0\end{pmatrix}
\end{equation}
for \(\eta\) a non-zero complex number. The one-parameter subgroup generated by an element conjugate to \(S\) is called \emph{elliptic} for \(\eta\) imaginary, \emph{hyperbolic} for \(\eta\) real, and \emph{loxodromic} for \(\eta\) a general complex number. A one-parameter subgroup generated by an element conjugate to \(N\) is called \emph{parabolic}. This terminology is inherited from the group \(PSL_2\C\) of M\"{o}bius transformations.

Recall that \((t,ix,iy,iz)\) in \(H^3\) corresponds via \eqref{matrix} to the Hermitian matrix
\begin{equation}
\label{Hermitian_matrix}
\begin{pmatrix}
t-x & iy-z \\-iy-z & t+x
\end{pmatrix}.
\end{equation}
By taking the exponential of the elements \(S\) and \(N\) we may use \eqref{hyperbolic_action} to compute the action of the one-parameter subgroups on \(H^3\). To help visualise these orbits we can use the Poincar\'{e} ball model of hyperbolic 3-space to identify \(H^3\) with the open ball in \(\R^3\) via the map which sends \((t,ix,iy,iz)\) to
\begin{equation*}
(X,Y,Z)=\left(\frac{x}{1+t},\frac{y}{1+t},\frac{z}{1+t}\right).
\end{equation*}
In Figure~\ref{balls} we include an illustration of these orbits on the Poincar\'{e} ball. In these diagrams the \(X\)-axis is the vertical line through the north and south poles.
\begin{figure}
	\centering
	\begin{subfigure}{0.32\textwidth}
		\centering
		\includegraphics[scale=.6]{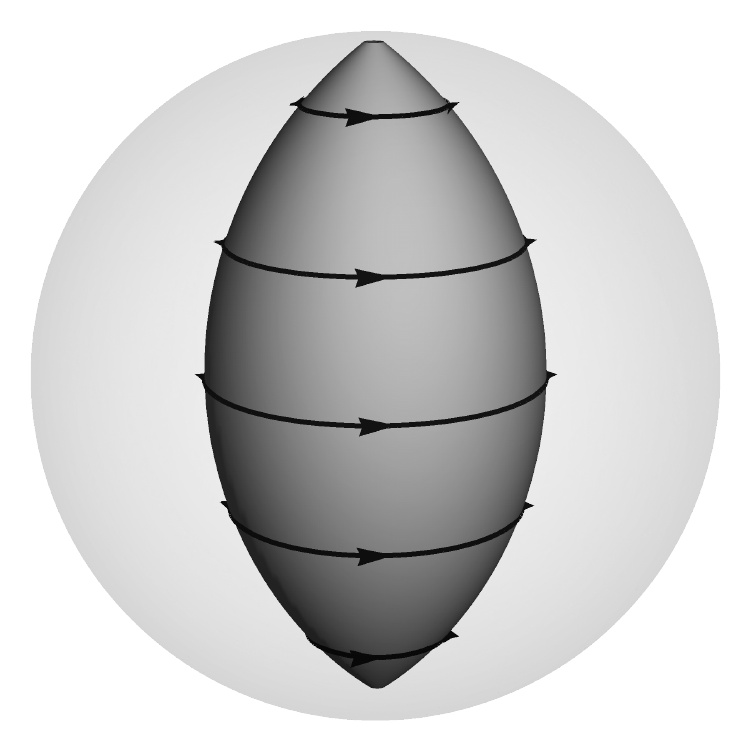}
		\caption{Elliptic}
	\end{subfigure}
	\begin{subfigure}{0.32\textwidth}
		\centering
		\includegraphics[scale=.6]{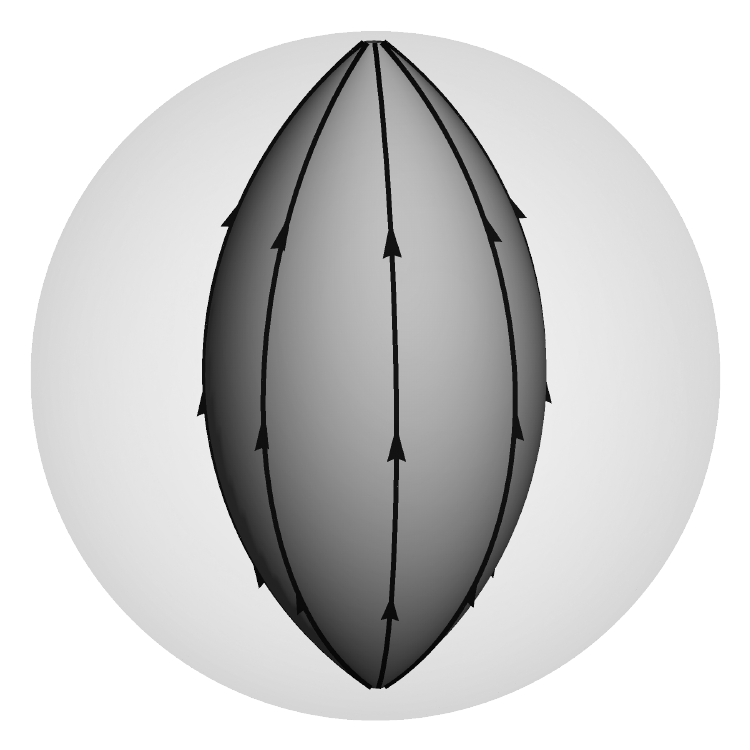}
		\caption{Hyperbolic}
	\end{subfigure}
	\\
	\begin{subfigure}{0.32\textwidth}
		\centering
		\includegraphics[scale=.6]{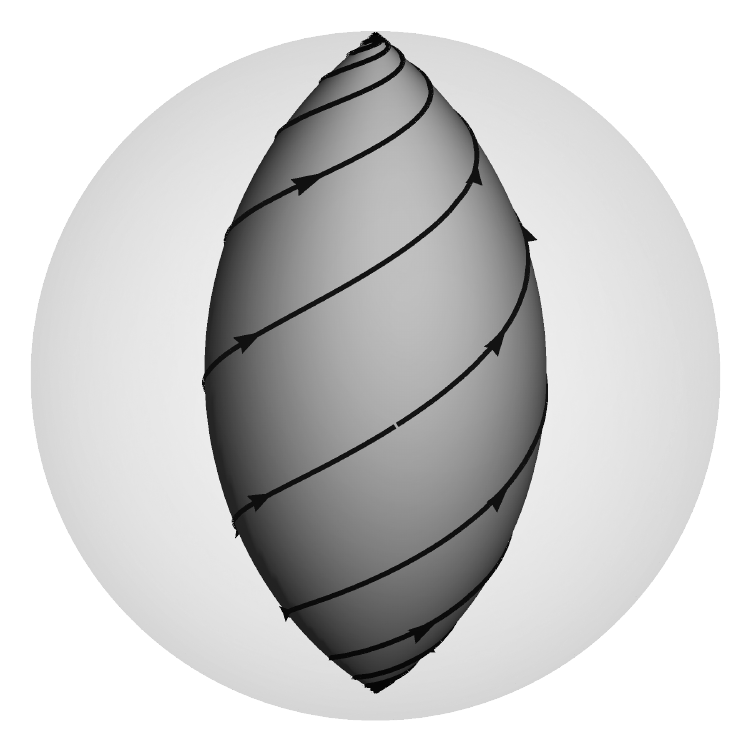}
		\caption{Loxodromic}
	\end{subfigure}
	\begin{subfigure}{0.32\textwidth}
		\centering
		\includegraphics[scale=.6]{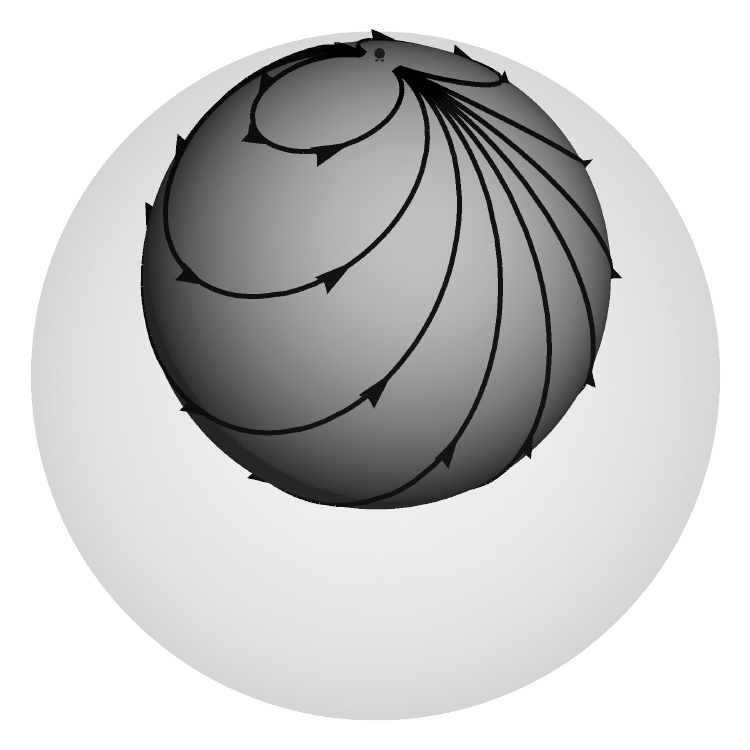}
		\caption{Parabolic}
	\end{subfigure}
	\caption{\label{balls} Orbits of one-parameter subgroups on the Poincar\'{e} ball. }
\end{figure}
\subsection{Geodesics and the Lobachevsky plane}
Let \(H^2\) denote the submanifold of \(H^3\) given by setting \(y=0\). Note from \eqref{Hermitian_matrix} that this submanifold equivalently corresponds to those purely real, symmetric matrices with unit determinant. It also corresponds to the open disk in the Poincar\'{e} ball by setting \(Y=0\). More generally, we shall refer to any 2-dimensional submanifold of \(H^3\) as a \emph{Lobachevsky plane} if it can be transformed to \(H^2\) by the action of \(SL_2\C\).

If we set \(y=0\) and \(x=0\) we obtain the curve \(H^1\). In terms of the general form of a biquaternion given in \eqref{biq} such elements of \(H^1\) may be written as
\begin{equation}\label{split}
e^{\psi j}=\cosh\psi+j\sinh\psi
\end{equation}
where \(j=-iK\). Since \(j^2=1\) we see that \(H^1\) is none other than the unit hyperbola in the algebra of split-complex numbers; the analogue of the unit circle in the complex numbers. Finally, we remark that \(H^1\) is a geodesic in \(H^3\) and that the action of \(SL_2\C\) can send any geodesic to \(H^1\).

\begin{propn}\label{momentum_crit}
	Let \(\widehat{J}\) denote the momentum map for the action of \(SL_2\C\) on \(T^*H^3\times T^*H^3\). The critical points of \(\widehat{J}\) are those points \((q_1,p_1,q_2,p_2)\) where \(p_1\) and \(p_2\) are each tangent to the geodesic through \(q_1\) and \(q_2\). The corresponding momentum \(C_{tot}\) in \(\slin_2\C^*\) has \(||C_{tot}||^2\) a non-negative real number. Conversely, \(||C_{tot}||^2\) is real if and only if \(p_1\) and \(p_2\) are each tangent to a Lobachevsky plane containing \(q_1\) and \(q_2\).
\end{propn}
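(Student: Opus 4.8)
The plan is to treat the three assertions separately, reformulating everything through the standard isomorphism $\slin_2\C\cong\mathfrak{so}(1,3)\cong\Lambda^2\R^{1,3}$, under which $H^3$ is the set of unit future timelike vectors, the left-momentum $L_k=p_kq_k^{-1}$ becomes the decomposable bivector $q_k\wedge p_k\in\Lambda^2\R^{1,3}$, and the complex quadratic form $\|\cdot\|^2$ has real part the induced bivector norm and imaginary part a nonzero multiple of the Pfaffian $\xi\mapsto\xi\wedge\xi$.

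\emph{Critical points.} For any momentum map, $\operatorname{im}(d\widehat J_m)$ is the annihilator of the stabiliser subalgebra $\mathfrak g_m=\{\xi\in\slin_2\C:\xi\text{ fixes }m\}$, so $m$ is critical for $\widehat J$ exactly when $\mathfrak g_m\neq 0$. For $m=(q_1,p_1,q_2,p_2)$ with $q_1\neq q_2$, an element $\xi$ fixes $m$ iff it fixes the base point $(q_1,q_2)$ and both covectors $p_1,p_2$. Fixing two distinct points of $H^3$ forces $\xi$ into the line spanned by the generator $\xi_\gamma$ of the elliptic one-parameter subgroup rotating about the geodesic $\gamma$ through $q_1$ and $q_2$: the stabiliser of one point has Lie algebra a copy of $\mathfrak{so}(3)$, and fixing a second point cuts this down to the line stabilising the axis. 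Since $\xi_\gamma$ fixes a covector based on $\gamma$ iff that covector is tangent to $\gamma$, we conclude $\mathfrak g_m\neq 0$ iff $\mathfrak g_m=\R\xi_\gamma$ iff $p_1$ and $p_2$ are both tangent to $\gamma$. This is the asserted description of $\Crit\widehat J$.

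\emph{Non-negativity at a critical point.} At such a point both $q_k$ and $p_k$ lie in the timelike $2$-plane $P\subset\R^{1,3}$ carrying $\gamma$, so $q_k\wedge p_k=c_k\,\omega_P$ for a real scalar $c_k$ and a unit bivector $\omega_P$ of $P$; hence $C_{tot}=L_1+L_2=(c_1+c_2)\,\omega_P$ and $\|C_{tot}\|^2=(c_1+c_2)^2\,\|\omega_P\|^2$ is a non-negative real number. Concretely one may conjugate $\gamma$ to $H^1$, whereupon $q_k=e^{\psi_k j}$, $p_k=c_k\,je^{\psi_k j}$ and $L_k=c_k\,j$, giving the same conclusion.

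\emph{The converse.} Expanding $C_{tot}=q_1\wedge p_1+q_2\wedge p_2$ and using $v\wedge v=0$,
\[
C_{tot}\wedge C_{tot}=2\,q_1\wedge p_1\wedge q_2\wedge p_2\in\Lambda^4\R^{1,3},
\]
and by the remarks above $\|C_{tot}\|^2$ is real precisely when this quantity vanishes, i.e.\ precisely when $q_1,p_1,q_2,p_2$ are linearly dependent, i.e.\ span a subspace $W$ with $\dim W\le 3$. Since $q_1$ is timelike, any subspace containing it is nondegenerate (a null vector orthogonal to a timelike vector is zero), so $W$ is nondegenerate and may be enlarged to a signature-$(1,2)$ subspace $W'$. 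Then $\Pi:=W'\cap H^3$ is a Lobachevsky plane containing $q_1$ and $q_2$, and since each $p_i\in W'$ with $\langle q_i,p_i\rangle=0$ we have $p_i\in T_{q_i}\Pi$, i.e.\ $p_i$ is tangent to $\Pi$. Conversely, if $p_1,p_2$ are tangent to a Lobachevsky plane $\Pi$ through $q_1,q_2$ then $q_1,p_1,q_2,p_2$ all lie in the $3$-dimensional linear span of $\Pi$, so they are linearly dependent and $\|C_{tot}\|^2\in\R$.

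\emph{Main obstacle.} The critical-point computation and the linear algebra of subspaces containing a timelike vector are routine. The step needing real care is setting up, in the paper's biquaternion conventions, the translation into $\Lambda^2\R^{1,3}$: verifying both that $p_kq_k^{-1}$ corresponds to $q_k\wedge p_k$ and that $\Imag\|\cdot\|^2$ on $\slin_2\C$ is a nonzero multiple of the Pfaffian $\xi\mapsto\xi\wedge\xi$. These are standard features of the $SL_2\C\to SO(1,3)$ correspondence, but fixing the normalisations — and, for the second assertion, confirming the sign of $\|\omega_P\|^2$ — is where the bookkeeping lies; once these are in hand, the three parts follow formally.
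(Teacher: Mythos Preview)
Your argument is correct and takes a genuinely different route from the paper. The paper works entirely inside the biquaternion/matrix model: it uses the \(SL_2\C\)-action to normalise the configuration (first putting \(q_1,q_2\) on \(H^1\), then for the converse placing \(q_1\) at the identity and \(q_2,p_2\) into \(H^2\)), and then computes the imaginary part of \(\|C_{tot}\|^2=\det(L_1+L_2)\) directly as \(2\langle p_1,[p_2,q_2^{-1}]\rangle\), finishing with a short case analysis (\(p_1\) real, or \(p_2\) commuting with \(q_2\)). Your approach instead passes through the isomorphism \(\slin_2\C\cong\mathfrak{so}(1,3)\cong\Lambda^2\R^{1,3}\), recognising \(L_k\) as the angular-momentum bivector \(q_k\wedge p_k\) and \(\Imag\|\cdot\|^2\) as a multiple of the Pfaffian. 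This makes the converse essentially a one-liner: \(\Imag\|C_{tot}\|^2=0\iff q_1\wedge p_1\wedge q_2\wedge p_2=0\iff\) the four vectors lie in a common \(3\)-plane, which (being nondegenerate since it contains a timelike vector) cuts out a Lobachevsky plane. That is considerably cleaner than the paper's case split, and it explains \emph{why} the condition is exactly tangency to a hyperbolic \(2\)-plane rather than discovering it by computation.

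What you give up is self-containment within the paper's conventions: as you note, the dictionary \(p_kq_k^{-1}\leftrightarrow q_k\wedge p_k\) and \(\Imag\det\leftrightarrow\) Pfaffian must be set up once. These are the standard facts that the \(SO(1,3)\)-momentum on \(T^*\R^{1,3}\) is \(q\wedge p\), and that the two \(\Ad\)-invariant quadratic forms on \(\mathfrak{so}(1,3)\) (Killing and Pfaffian) are the real and imaginary parts of the complex Killing form on \(\slin_2\C\); but the normalisations do need pinning down. One place to be careful is the sign in part two: with the induced form on \(\Lambda^2\R^{1,3}\) a timelike \(2\)-plane has \(\|\omega_P\|^2<0\), so to recover the paper's ``non-negative'' you must track exactly how \(\det\) on \(\slin_2\C\) matches the bivector norm (there is an overall sign in the usual identification). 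The paper sidesteps this by an explicit split-complex computation; in your framework it is the one place where the bookkeeping actually matters.
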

\begin{proof}
	A point is a critical point of the momentum map if and only if the action at this point is not locally free. We may suppose \(q_1\) and \(q_2\) belong to \(H^1\). The isotropy subgroup fixing \(H^1\) is the group of rotations \(SO(2)\subset SO(1,3)\) in the \(xy\)-plane. Therefore, the action fails to be locally free when \(p_1\) and \(p_2\) have no \(x\) or \(y\) component. A calculation in terms of the split-complex numbers reveals that \(||L_{tot}||^2\) is real and non-negative. 
	
	We may use the \(SL_2\C\)-action to place the first particle at the centre of the Poincar\'{e} ball where \(q_1=I\). We can then rotate the ball so that \(q_2\) belongs to \(H^1\) and \(p_2\) to \(H^2\). Since \(p_1,q_2\) and \(p_2\) are all Hermitian, the imaginary part of \(||C_{tot}||^2\) is 
	\[
	\det(L_1+L_2)-\det(L_1^\dagger+L_2^\dagger)=2\langle p_1,[p_2,q_2^{-1}]\rangle.
	\]
	Observe that \(p_1\) is a traceless, Hermitian matrix since it is tangent to \(H^3\) at \(q_1=I\), and that \([p_2,q_2^{-1}]\) is a real, skew-symmetric matrix as \(p_2\) and \(q_2\) are each real and symmetric. It follows that \(||C_{tot}||^2\) is real if and only if, either \(p_2\) commutes with \(q_2\), or if \(p_1\) is real. If \(p_1\) is real then everything belongs to \(H^2\). If \(q_2\) commutes with \(p_2\), then \(p_2\) must also be a split-complex number, and hence, \(p_2\) is tangent to \(H^1\) at \(q_1\). We can therefore rotate the ball about the \(H^1\)-axis to send \(p_1\) into \(H^2\).
\end{proof}

\subsection{The right-reduced space}
Instead of reducing by the full \(SL_2\C\times SL_2\C\)-symmetry we shall instead obtain an intermediate reduced space by reducing by the action of right-multiplication by \(SL_2\C\). The orbit quotient is given by sending \((q_1,p_1,q_2,p_2)\) to 
\[
(L_1,L_2,q_R)
\]
where we have introduced the right-invariant quantity \(q_R=q_1q_2^{-1}\). One can directly perform this orbit quotient by using Corollary~3.8.5 in \cite{marsden_1992} or by applying the Semidirect Product Reduction by Stages Theorem as in \cite{paper3}. The reduced Poisson structure on \(\slin_2\C^*\times\slin_2\C^*\times\C^4\) is the Poisson structure on the dual of the complexified special-Euclidean Lie algebra \(\mathfrak{se}_4\C=\mathfrak{so}_4\C\ltimes\C^4\). The Hamiltonian \(H\) descends to
\begin{equation}
H(L_1,L_2,q_R)=-\frac{||L_1||^2}{2m_1}-\frac{||L_1||^2}{2m_1}+V(z)
\end{equation}
where \(z=\langle q_1,q_2\rangle=\langle q_R,1\rangle\). We can use the explicit expression for the Lie-Poisson equations for a semidirect product \cite{semidirect} to obtain the equations of motion
\begin{equation}\label{Ham_eqns}
\def\arraystretch{1.5}
\begin{array}{l}
\dot{L}_1=+f(z)\Imag(q_R),\\\dot{L}_2=-f(z)\Imag(q_R),\\
\displaystyle\dot{q}_R=-\frac{L_1}{m_1}q_R+q_R\frac{L_2}{m_2}.
\end{array}
\end{equation}
In this set of equations \(f(z)=dV/dz\) and \(\Imag(q)=q-\langle q,1\rangle\) is the `imaginary part' of a biquaternion \(q\).
\subsection{Hyperbolic relative equilibria}
The action of left-multiplication by \(SL_2\C\) descends to the right-reduced space in \(\mathfrak{se}_4\C^*\) as
\begin{equation}\label{residual}
g\cdot(L_1,L_2,q_R)=\left(gL_1g^{-1},gL_2g^{-1},gq_Rg^{-1}\right).
\end{equation}
A RE of the system on \(T^*\CS^3\times T^*\CS^3\) is a solution contained to an orbit of \(SL_2\C\times SL_2\C\). It must therefore descend to a RE on the right-reduced space with respect to the group action above. Combined with Proposition~\ref{RE_trick} we see that RE of the hyperbolic 2-body problem can be found by classifying those RE in \(\mathfrak{se}_4\C^*\) which lift to solutions on \(T^*H^3\times T^*H^3\).

Without any loss of generality, we may suppose the RE are orbits of one-parameter subgroups generated by \(S\) or \(N\) given in \eqref{SN}. We separate this subsection into two parts to handle both cases separately.

\subsubsection{Semisimple RE}

By differentiating the action of the one-parameter subgroup \(g(t)=\exp(St)\) in \eqref{residual} and setting this to equal the equations of motion in \eqref{Ham_eqns} we obtain
\begin{align}
\left[S,L_1\right]&=+f\Imag(q_R),\label{one}\\
\left[S,L_2\right]&=-f\Imag(q_R),\label{two}\\
\left[S,q_R\right]&=-\frac{L_1}{m_1}q_R+q_R\frac{L_2}{m_2}. \label{three}
\end{align}
The first two equations imply \(\Imag(q_R)\) is orthogonal to \(S=-i\eta I\), where we are now using the biquaternion notation from \eqref{biq}. Therefore, we may suppose \(q_R\) is a biquaternion of the form \(u1+zK\). For hyperbolic RE we must have \(\langle q_R,1\rangle=\langle q_1,q_2\rangle=\cosh\psi\), and so it follows that \(q_R\) is the split-complex number \(e^{\psi j}\) from \eqref{split}. Equations~\eqref{one} and \eqref{two} are now satisfied for 
\begin{equation}
\label{xy}
L_1=x_1I+yJ\quad\text{and}\quad L_2=x_2I-yJ,
\end{equation}
where
\begin{equation}\label{defn_y}
y=\frac{f\sinh\psi}{2\eta},
\end{equation}
and \(x_1\) and \(x_2\) are complex numbers yet to be determined. Expanding Equation~\eqref{three} using the multiplication of biquaternions results in a linear system of equations in \(x_1\) and \(x_2\). For \(\psi\ne 0\) the solutions are uniquely given by
\begin{equation}\label{x1x2}
\def\arraystretch{2.5}
\begin{array}{l}
\displaystyle x_1=iy\left(\coth 2\psi+\frac{m_1}{m_2}\csch 2\psi\right)+im_1\eta,\\
\displaystyle x_2=iy\left(\coth 2\psi+\frac{m_2}{m_1}\csch 2\psi\right)+im_2\eta.
\end{array}
\end{equation}
\subsubsection{Non-existence of parabolic RE}
For a parabolic RE we replace \(S\) with \(N\) in Equations~\eqref{one}--\eqref{three}. By considering the action of the isotropy subgroup of \(N\) in \(SL_2\C\) we may suppose that \(q_R\) is of the form \(\cosh\psi+i\sinh\psi I\). In this case, Equations~\eqref{one} and \eqref{two} are satisfied for \(L_s=y_sJ+z_sK\) where
\[
y_1-iz_1=-y_2+iz_2=f\sinh\psi.
\]
Expanding Equation~\ref{three} reveals that we require
\[
0=f\sinh\psi(m_1e^\psi+m_2e^{-\psi}).
\]
For a strictly attractive potential \(f\) is never zero, and so the equation above cannot hold for \(\psi\ne 0\), and hence, no such parabolic RE exist.
\subsection{Reconstruction and classification}
According to the matrix representation of biquaternions in \eqref{matrix} \(q_R=e^{\psi j}\) is Hermitian. For a hyperbolic RE \(q_1\) and \(q_2\) are also Hermitian, therefore, since \(q_R=q_1q_2^{-1}\) we see that this implies \(q_1\) and \(q_2\) each commute with \(q_R\). This is only possible if they too belong to the split-complex numbers. Therefore, we must have
\(
q_1=e^{\chi_1j}\) and \( q_2=e^{-\chi_2j}
\)
for \(\chi_1\) and \(\chi_2\) real numbers satisfying \(\chi_1+\chi_2=\psi\).

For each particle \(q_s\) we may differentiate the \(SL_2\C\)-action in \eqref{hyperbolic_action} for the orbit of \(g(t)=\exp({St})\) to find the velocity vector 
\(
\dot{q}_s=Sq+qS^\dagger
\).
The momentum \(p_s\) is then given by \(-m_s\dot{q}_s\). Recall that this minus sign is a consequence of the negative kinetic energy term in \eqref{ham_complex}. We can then calculate \(L_1\) and \(L_2\) explicitly and compare this with the expression in \eqref{xy}. From this calculation we obtain
\begin{equation}
\label{x_and_y}
x_s=im_s(\eta+\overline{\eta}\cosh 2\chi_s)\quad\text{and}\quad y=m_s\overline{\eta}\sinh 2\chi_s.
\end{equation}
By setting these expressions to equal those in Equations~\eqref{defn_y} and \eqref{x1x2} we obtain the full classification of RE for the 2-body problem on \(H^3\).
\begin{thm}\label{classification}
	Every semisimple RE for the 2-body problem on \(H^3\) is conjugate to a RE generated by the biquaternion \(S=-i\eta I\) in \(\slin_2\C\subset M_2(\C)\) for a complex number \(\eta\). These solutions are classified up to conjugacy by the \emph{separation} \(\psi>0\) between the particles and the \emph{phase} \(\theta=\arg\eta\in[0,\pi)\). For each \((\theta, \psi)\) we may suppose \(q_1=e^{\chi_1j}\) and \(q_2=e^{-\chi_2 j}\) for \(\chi_1\) and \(\chi_2\) positive real numbers uniquely determined by \(\chi_1+\chi_2=\psi\) and
	\[
	m_1\sinh 2\chi_1=m_2\sinh 2\chi_2.
	\]
	The modulus of \(\eta\) satisfies
	\[
	|\eta|^2=\frac{f\sinh\psi}{2\zeta}
	\]
	where \(\zeta=m_1\sinh 2\chi_1=m_2\sinh 2\chi_2\). Here \(f=dV/dz\) where \(z=\cosh\psi\) and \(V\) is the potential. For a strictly attractive potential there are no parabolic RE.
\end{thm}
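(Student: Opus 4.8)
The plan is to assemble the statement from the computations carried out in the two preceding subsections, together with one short new argument establishing the existence and uniqueness of the splitting $\chi_1+\chi_2=\psi$.

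I would begin by recording the normal form. A semisimple RE is, with no loss of generality, the orbit of $\exp(tS)$ with $S=-i\eta I$: this is the trichotomy of \eqref{SN}, the nilpotent case being eliminated by the non-existence argument for parabolic RE, and $\mathrm{diag}(\eta,-\eta)$ is exactly $-i\eta I$ in the biquaternion notation of \eqref{biq}. The separation $\psi$ is conjugation-invariant because $SL_2\C$ acts on $H^3$ by isometries, and $\psi>0$ away from the collision set. Since the Weyl element conjugates $S$ to $-S$ in $SL_2\C$, the generators $-i\eta I$ and $-i(-\eta)I$ give conjugate RE, so $\theta=\arg\eta$ may be normalized into $[0,\pi)$. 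This isolates $(\psi,\theta)$ as the invariants that are to classify the solution.

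The core of the argument is then to equate the two descriptions of the reduced momenta. From the discussion of \eqref{one}--\eqref{three} we have $q_R=e^{\psi j}$, $L_1=x_1I+yJ$, $L_2=x_2I-yJ$ with $y$ given by \eqref{defn_y} and $x_1,x_2$ by \eqref{x1x2}; the reconstruction supplies $q_1=e^{\chi_1 j}$, $q_2=e^{-\chi_2 j}$ with $\chi_1+\chi_2=\psi$ together with the expressions \eqref{x_and_y}. Equating the two formulas for $y$ gives $m_1\overline{\eta}\sinh 2\chi_1=m_2\overline{\eta}\sinh 2\chi_2$, hence $m_1\sinh 2\chi_1=m_2\sinh 2\chi_2=:\zeta$ since $\eta\ne 0$. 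I would then check that the two formulas for each $x_s$ agree automatically: substituting $y=m_1\overline{\eta}\sinh 2\chi_1$ and $y=m_2\overline{\eta}\sinh 2\chi_2$ into the appropriate terms of \eqref{x1x2}, the identity to be verified collapses, via $\chi_1+\chi_2=\psi$, to the hyperbolic subtraction formula $\cosh 2\chi_1\sinh 2\psi-\sinh 2\chi_1\cosh 2\psi=\sinh 2\chi_2$, so no further constraint is imposed. Finally, writing $y=\zeta\overline{\eta}$ from \eqref{x_and_y}, setting it equal to $f\sinh\psi/(2\eta)$ from \eqref{defn_y}, and clearing $\eta$ gives $\zeta|\eta|^2=\tfrac12 f\sinh\psi$, the claimed formula for $|\eta|^2$.

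It then remains to show the splitting exists and is unique. For fixed $\psi>0$ the function $\chi\mapsto m_1\sinh 2\chi-m_2\sinh 2(\psi-\chi)$ has positive derivative $2m_1\cosh 2\chi+2m_2\cosh 2(\psi-\chi)$, is negative at $\chi=0$, and is positive at $\chi=\psi$, so it has a unique zero $\chi_1\in(0,\psi)$, and $\chi_2=\psi-\chi_1$ is then also positive. For a strictly attractive potential $f>0$, so $|\eta|^2=f\sinh\psi/(2\zeta)>0$ and $\eta=|\eta|e^{i\theta}$ is a genuine nonzero element of $\slin_2\C$ for each $\theta\in[0,\pi)$. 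Conversely, reading the construction backwards --- choosing $\chi_1,\chi_2,\eta$ this way, then $q_1,q_2$ by \eqref{split}, $L_1,L_2$ by \eqref{xy}, and $p_s=-m_s\dot q_s$ from the orbit of $\exp(tS)$ under \eqref{hyperbolic_action} --- produces a point of $T^*H^3\times T^*H^3$ whose reduced data solves \eqref{one}--\eqref{three}, hence an honest RE with invariants $(\psi,\theta)$; as every implication above is reversible this yields the stated bijection, and the non-existence of parabolic RE merely restates the earlier computation. I expect the only delicate point to be the $x_s$-consistency check: it is what makes the classification non-vacuous, and it requires substituting the right expression for $y$ into the right term so that the check reduces to a hyperbolic addition formula rather than over-determining the system.
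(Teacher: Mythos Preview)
Your proposal is correct and follows essentially the same route as the paper: equating the reduced-space expressions \eqref{defn_y}--\eqref{x1x2} with the reconstructed momenta \eqref{x_and_y} to extract the constraints $m_1\sinh 2\chi_1=m_2\sinh 2\chi_2$ and $|\eta|^2=f\sinh\psi/(2\zeta)$. You add details the paper leaves implicit---the Weyl-element normalization of $\theta$ to $[0,\pi)$, the hyperbolic-identity verification that the $x_s$-equations impose no further constraint, and the monotonicity argument for existence and uniqueness of $(\chi_1,\chi_2)$---all of which are correct and welcome.
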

\section{Stability}
\subsection{Some invariant theory}
We may further reduce the right-reduced space by the residual action of left-multiplication in \eqref{residual} to obtain the {full reduced space}. To do this we shall consider the (holomorphic) Poisson reduction of \(\mathfrak{se}_4\C^*\) by the \(SL_2\C\)-action. 

By writing \((L_1,L_2,q_R)\) as \((L_1,L_2,\Imag(q_R),z)\) the action decomposes into three irreducible \(\slin_2\C\)-components and one trivial component. The isomorphism \(\slin_2\C\cong\mathfrak{so}_3\C\) intertwines the \(SL_2\C\)-adjoint-representation with the standard vector representation of \(SO_3\C\). We can therefore employ the First Fundamental Theorem of Invariant Theory to obtain generators of the \(SL_2\C\)-invariant ring, see for instance \cite{fultonharris}. If we let \((v_1,v_2,v_3)\) denote \((L_1,L_2,\Imag(q_R))\) then the generators of this ring are the pairwise products \(k_{ij}=\langle v_i,v_j\rangle\) and the determinant \(\delta=\langle v_3,[v_1,v_2]\rangle\). These satisfy the algebraic relations \(k_{ij}=k_{ji}\) and \(\delta^2=\det k_{ij}\).

The categorical quotient for this action is the map 
\begin{equation}\label{pi}
\pi\colon(L_1,L_2,q_R)\longmapsto\left(\{k_{ij}\}_{i\le j},\delta,z\right)\in\C^8.
\end{equation}
Unfortunately, this is not an orbit-map since there exist orbits which are not separated by the invariants. Geometric Invariant Theory arises to resolve this issue by restricting to a subset of so-called stable points upon which this map is a geometric quotient, or in other words, an orbit-map \cite{newstead}.
\begin{lem}
	If \(v_1\), \(v_2\) and \(v_3\) are each semisimple elements in \(\slin_2\C\), and not all colinear, then \((v_1,v_2,v_3)\) is stable with respect to the diagonal \(SL_2\C\)-action.
\end{lem}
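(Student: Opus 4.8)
The plan is to use the Hilbert--Mumford numerical criterion for stability: a point $x$ of an affine $G$-variety is stable (in the GIT sense, with respect to the trivial linearisation on affine space, so that stability means the orbit is closed \emph{and} the stabiliser is finite) if and only if for every nontrivial one-parameter subgroup $\lambda\colon \C^\times\to G$ the limit $\lim_{s\to 0}\lambda(s)\cdot x$ does not exist, i.e. $\lambda(s)\cdot x$ runs off to infinity as $s\to 0$. For $G=SL_2\C$ acting by the adjoint (equivalently vector $SO_3\C$) representation on $\slin_2\C^{\oplus 3}$, every nontrivial one-parameter subgroup is, after conjugation, of the form $\lambda(s)=\mathrm{diag}(s,s^{-1})$, so it suffices to analyse weight decompositions under this single torus.

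First I would recall that under $\lambda(s)=\mathrm{diag}(s,s^{-1})$ the adjoint representation $\slin_2\C$ decomposes into weight spaces of weights $+2,0,-2$, spanned respectively by the raising element $N=\bigl(\begin{smallmatrix}0&1\\0&0\end{smallmatrix}\bigr)$, the Cartan element, and the lowering element. Writing each $v_k = v_k^{+} + v_k^{0} + v_k^{-}$ accordingly, the limit $\lim_{s\to 0}\lambda(s)\cdot v_k$ exists precisely when $v_k^{-}=0$, i.e. when $v_k$ is upper triangular. So $\lim_{s\to 0}\lambda(s)\cdot(v_1,v_2,v_3)$ exists if and only if all three $v_k$ are simultaneously upper triangular with respect to the basis adapted to $\lambda$. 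The strategy is then: given the hypotheses (each $v_k$ semisimple, not all colinear), show that no single one-parameter subgroup can make all three simultaneously upper triangular, hence by Hilbert--Mumford the point is stable.

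The key step is the following linear-algebra fact: a semisimple element of $\slin_2\C$ is upper triangular (with respect to a fixed basis, equivalently it fixes the line $\C e_1$) if and only if $e_1$ is one of its two eigenvectors. A one-parameter subgroup $\lambda$ singles out a line $\ell\subset\C^2$ (its attracting eigenline), and $\lim_{s\to 0}\lambda(s)\cdot v_k$ exists iff $v_k$ preserves $\ell$. Since each $v_k\neq 0$ is semisimple with two distinct eigenlines, $v_k$ preserves exactly those two lines; so for the limit to exist for all three, $\ell$ must be a common eigenline of $v_1,v_2,v_3$. Now I would argue: if all three share a common eigenline $\ell$, then since each $v_k$ is traceless and semisimple, $v_k$ acts on $\ell$ by some scalar $a_k$ and on a complementary eigenline by $-a_k$; but one can further choose the \emph{other} eigenline. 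The real point is that if $v_1, v_2, v_3$ all preserve a common line $\ell$ and are all semisimple and traceless, one shows they are forced to be simultaneously diagonalisable (the centraliser of a common eigenline among semisimple traceless elements is, generically, a single Cartan), hence pairwise commuting, hence all proportional to a single Cartan generator --- contradicting the non-colinearity hypothesis. I would make this precise by noting: if $v_1$ and $v_2$ are semisimple, not proportional, and share the eigenline $\ell$, then they cannot share the \emph{other} eigenline (else they would be simultaneously diagonal, hence commuting, hence --- both traceless $2\times2$ and commuting and semisimple --- proportional), so $v_1$ has eigenlines $\ell,\ell_1$ and $v_2$ has eigenlines $\ell,\ell_2$ with $\ell_1\neq\ell_2$; then $v_3$, sharing eigenline $\ell$, has a second eigenline $\ell_3$, and no choice of $\lambda$ with attracting line $\ell$ causes a problem because the \emph{repelling} line of $\lambda$ must then avoid being an eigenline issue --- actually I must check the $s\to 0$ limit carefully and also rule out $\ell$ being such that the limit still exists only via the weight-$0$ part; I will instead phrase it cleanly as: the limit exists iff $\ell$ is a common invariant line, and a common invariant line for three non-colinear semisimple traceless elements cannot exist.

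The main obstacle I expect is precisely pinning down this last claim rigorously: ruling out a common invariant line for three pairwise-non-proportional semisimple traceless $2\times2$ matrices that are not all colinear. The subtlety is that "not all colinear" is weaker than "pairwise non-proportional" --- two of the $v_k$ could coincide up to scalar as long as the third is independent --- so the argument must handle the case where, say, $v_1\parallel v_2\not\parallel v_3$. In that sub-case a common invariant line $\ell$ would be a common eigenline of $v_1$ (hence $v_2$) and of $v_3$; but $v_1$ (semisimple) and $v_3$ (semisimple, independent) sharing \emph{one} eigenline is possible, so I genuinely need the Hilbert--Mumford limit computation, not just invariant-line combinatorics: I must verify that when $\ell$ is the attracting line of $\lambda$ and is a common eigenline, the limit $\lim_{s\to0}\lambda(s)\cdot v_k$ is a \emph{nonzero} element for each $k$ and the limiting triple still lies in the closure but the \emph{orbit does not degenerate}, versus showing it genuinely destabilises. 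The cleanest fix, which I would adopt, is to additionally track the second eigenline: choosing $\lambda$ whose \emph{repelling} line is a common eigenline of all three (if one exists) sends everything to $0$, destabilising --- but if $\ell$ is a common eigenline then so is forced the structure, and I will show a common eigenline forces all three into one Cartan subalgebra, i.e. colinearity, giving the contradiction. I would close by invoking Hilbert--Mumford in the form of \cite{newstead} to conclude stability.
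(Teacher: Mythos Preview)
Your Hilbert--Mumford approach is exactly the paper's, just in different language: the paper's subspace $V_0\oplus V_{+1}$ is the Borel of upper-triangular matrices, so ``all three lie in $V_0\oplus V_{+1}$ for some one-parameter subgroup'' is precisely your ``all three preserve a common line $\ell$''. The paper then argues that for $v_i,v_j$ in this degenerate plane the Gram determinant $k_{ii}k_{jj}-k_{ij}^2$ vanishes, whence $\det[v_i,v_j]=0$, and asserts that semisimplicity of $v_i,v_j$ then forces $[v_i,v_j]=0$ and hence proportionality.

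The step you flagged as the obstacle is a genuine gap, and it cannot be filled: the claim is false. Take $v_1=H=\mathrm{diag}(1,-1)$, $v_2=H+E$, $v_3=H+2E$ with $E=\bigl(\begin{smallmatrix}0&1\\0&0\end{smallmatrix}\bigr)$. Each $v_k$ is semisimple (eigenvalues $\pm1$), they are not all proportional, yet all three are upper-triangular and share the eigenline $\C e_1$. Under $\lambda(s)=\mathrm{diag}(s,s^{-1})$ one has $\lambda(s)\cdot(v_1,v_2,v_3)\to(H,H,H)$ as $s\to0$, so this triple is \emph{not} stable. The paper's implication ``$\det[v_i,v_j]=0$ and $v_i,v_j$ semisimple $\Rightarrow[v_i,v_j]=0$'' fails on the same example, since $[H,H+E]=2E\neq0$. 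Thus your proposed resolution (a common eigenline forces a common Cartan) and the paper's are both incorrect, and the lemma as literally stated---reading ``colinear'' as ``all proportional'', which is the reading used elsewhere in the paper---appears to be false. If instead ``not all colinear'' is read as ``linearly independent'', the lemma becomes true for the trivial reason that $V_0\oplus V_{+1}$ is two-dimensional, and the semisimplicity hypothesis is superfluous.
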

\begin{proof}
	Consider the action of a one-parameter subgroup \(\C^\times\) generated by a semisimple generator \(S\) as in \eqref{SN}. The space \(\slin_2\C\) decomposes as \(V_0\oplus V_{-1}\oplus V_{+1}\) where the action on \(V_w\) is \(t\cdot v=t^wv\). Note that \(V_{+1}\) and \(V_{-1}\) must therefore be isotropic subspaces with respect to the complex bilinear form \(\langle~,~\rangle\). The Hilbert-Mumford criterion tells us that a point is stable if there does not exist any such one-parameter subgroup for which \(v_1\), \(v_2\), and \(v_3\) all belong to either \(V_0\oplus V_{-1}\) or \(V_0\oplus V_{+1}\). Suppose \(v_i\) and \(v_j\) belong to such a subspace. This subspace is degenerate with respect to \(\langle~,~\rangle\) and so \(4(k_{ii}k_{jj}-k_{ij}^2)=\det([v_i,v_j])\) must be zero. By semisimplicity of \(v_i\) and \(v_j\) this implies \([v_i,v_j]=0\), and therefore, \(v_i\) and \(v_j\) are colinear. 
\end{proof}
\subsection{The full reduced space}
Let \(M\) denote the subset of stable points of \(T^*\CS^3\times T^*\CS^3\) with respect to the \(G=SL_2\C\times SL_2\C\)-action. The elements \(L_1\), \(L_2\), and \(\Imag(q_R)\) are always semisimple on the hyperbolic real form. Moreover, they are only ever colinear when the \(SL_2\C\)-action fails to be locally free, and hence, by Proposition~\ref{momentum_crit} we see that the real form \(M^R\) is 
\begin{equation*}\label{MR}
(T^*H^3\times T^*H^3)\setminus\Crit\widehat{J}.
\end{equation*}
The group \(G\) acts freely on \(M^R\), and therefore, thanks to \(R\)-compatibility, the intersection of a \(G\)-orbit with \(M^R\) is an orbit of \(G^\rho\). This implies that \(M^R\hookrightarrow M\) descends to an injection of orbit spaces \(M^R/G^\rho\hookrightarrow M/G\). It follows that \(\pi\) restricted to \(M^R\) defines an orbit map for the action of \(SL_2\C\) on \((T^*H^3\times T^*H^3)\setminus\Crit\widehat{J}\).

In fact, although we shall not pursue this further, we remark that because the action of \(G\) is \(R\)-compatible, \(R\) descends to a real structure \(\widetilde{R}\) on the reduced space \(M/G\). The real reduced space \(M^R/G^\rho\) belongs to the fixed-point set of \(\widetilde{R}\) which we call a \emph{real-Poisson form}. For our purposes, the most important consequence of this is that the symplectic leaves of \(M^R/G^\rho\) are real-symplectic forms of the holomorphic symplectic leaves in \(M/G\) \cite[Theorem~2.2]{complex}.

\subsection{Degenerate RE}
Thanks to the classification in Theorem~\ref{classification} we may parameterise the hyperbolic RE in the full reduced space by \((\theta, \psi)\). This parameterises a surface \(\Fix H\) of fixed-points in the full reduced space. The critical values of the energy-Casimir map
\[
\mu\colon M^R/G^\rho\rightarrow\R\times\C;\quad\left(\{k_{ij}\}_{i\le j},\delta,z\right)\longmapsto(H,||C_{tot}||^2)
\] 
are precisely the image of \(\Fix H\). We call this image the \emph{energy-Casimir diagram}. For the gravitational potential in \eqref{potential} this image is included in Figure~\ref{diagram}. In practice, to obtain this image it is first necessary to show that
\begin{equation}\label{zeta}
\zeta=m_1m_2Z^{-1}e^\psi\sinh 2\psi
\end{equation}
where
\begin{equation}\label{Z_defn}
Z=\sqrt{(m_1+m_2e^{2\psi})(m_2+m_1e^{2\psi})}.
\end{equation}
One can then express \(\chi_1\), \(\chi_2\), and \(|\eta|^2\) in terms of \((\theta, \psi)\) and use \eqref{x_and_y} to carry out the computation. Observe that this image is pinched along a singular cusp. As the next proposition shows, this is indicative of degenerate RE. The proof is immediate.
\begin{figure}
	\centering
	\includegraphics{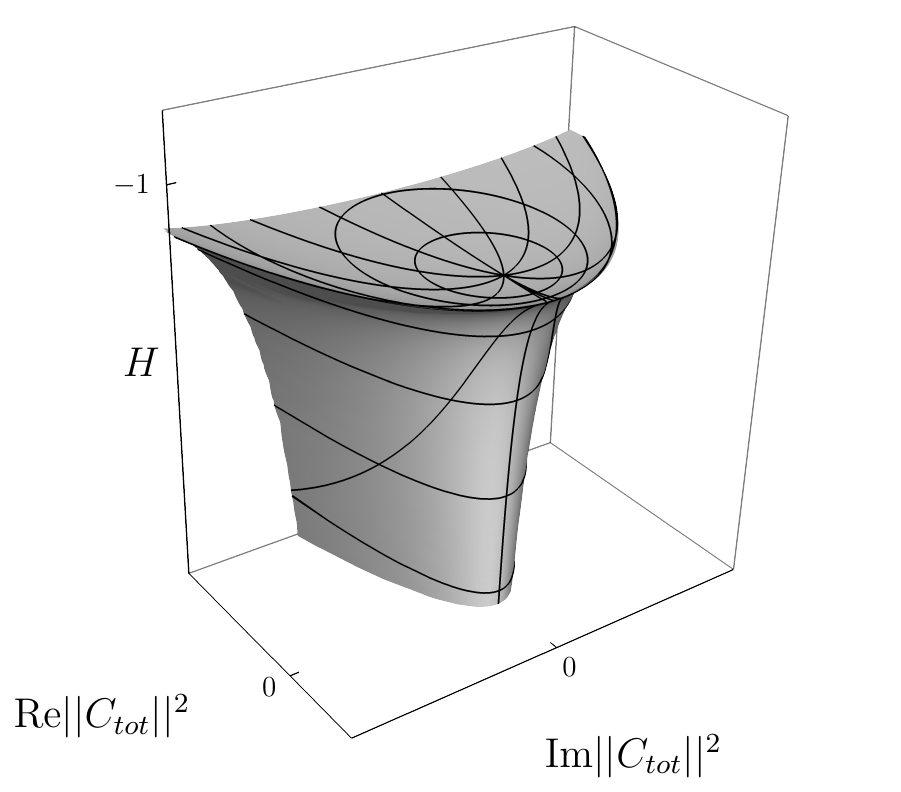}
	\caption{\label{diagram} Energy-Casimir diagram for \(m_1=1\) and \(m_2=2\). The lines emanating from the focal point are curves of constant \(\theta\) and those transversal to them are of constant \(\psi\).}
\end{figure}
\begin{propn}\label{RE_result}
	Let \(P\) be a Poisson manifold and suppose the symplectic leaf \(M\) through \(x\) is the regular level set of a Casimir function \(C\). Consider the Hamiltonian flow generated by \(H\) and suppose \(\Fix H\) is an immersed submanifold containing \(x\). The subspace \(\ker D_x(C|_{\Fix H})\subset T_xM\) is fixed by the linearised flow of \(H\) at \(x\).
\end{propn}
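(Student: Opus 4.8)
\emph{Proof proposal.} The plan is to show directly that the linearisation $A$ of the Hamiltonian vector field $X_H$ at the equilibrium $x$ annihilates the subspace $W\coloneqq\ker D_x(C|_{\Fix H})$; since the linearised flow is $t\mapsto\exp(tA)$, this immediately gives that $W$ is pointwise fixed, which is the assertion. First I would set up the objects carefully. Because $x$ lies in $\Fix H$, it is a zero of $X_H$, so the linearisation $A=D_xX_H$ is a well-defined endomorphism of $T_xP$ (coordinate-independent precisely because $X_H(x)=0$). Since $C$ is a Casimir, $\{H,C\}=-dC(X_H)=0$, so $X_H$ is everywhere tangent to the level sets of $C$; differentiating the identity $D_yC\big(X_H(y)\big)=0$ at $y=x$ and using $X_H(x)=0$ shows that $A$ preserves $\ker D_xC$. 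By hypothesis $M=C^{-1}(C(x))$ is a regular level set, so $T_xM=\ker D_xC$, and $A|_{T_xM}$ is exactly the linearisation at $x$ of the Hamiltonian flow of $H$ on the symplectic leaf $M$ — this is the operator the statement calls ``the linearised flow of $H$ at $x$''.

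Next I would identify $W$ concretely. Writing $\iota\colon\Fix H\hookrightarrow P$ for the inclusion (an immersion by hypothesis), we have $D_x(C|_{\Fix H})=D_xC\circ D_x\iota$, hence
\[
W=D_x\iota\big(T_x\Fix H\big)\cap\ker D_xC=T_x\Fix H\cap T_xM .
\]
In particular $W\subseteq T_xM$, which is the inclusion recorded in the statement; note that this requires no transversality of $\Fix H$ with $M$, since $W$ is defined as a kernel rather than as the tangent space of an intersection. The key step is then immediate: $X_H$ vanishes identically on $\Fix H$ by definition, so for any smooth curve $\gamma$ in $\Fix H$ with $\gamma(0)=x$ we have $X_H(\gamma(s))\equiv 0$, and differentiating at $s=0$ gives $A\big(\dot\gamma(0)\big)=0$. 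As $\dot\gamma(0)$ ranges over $T_x\Fix H$, this shows $A$ annihilates all of $T_x\Fix H$, and a fortiori $A$ annihilates $W\subseteq T_x\Fix H$. Hence $\exp(tA)$ restricts to the identity on $W$ for every $t$, proving the proposition.

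I expect there to be no genuine obstacle here — consistent with the paper's remark that ``the proof is immediate'' — only three minor points of bookkeeping: (i) that $A$ is intrinsically defined, which holds at an equilibrium; (ii) that $A$ really does descend to $T_xM$, which is the Casimir identity above; and (iii) that $\Fix H$ is only immersed, which is harmless because the argument uses nothing beyond curves through $x$ contained in $\Fix H$. If one preferred an even shorter route, one could instead observe that $\Fix H$ and $M$ are both invariant under the flow of $X_H$ (the former pointwise, the latter because Hamiltonian flows preserve Casimir level sets), so the flow fixes $W$ setwise; but the curve computation above is the cleanest way to get the stronger pointwise statement that $0$ is an eigenvalue of $A|_{T_xM}$ with $W$ inside the corresponding eigenspace, which is what signals the degeneracy of the relative equilibrium.
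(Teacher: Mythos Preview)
Your proof is correct and supplies precisely the details that the paper omits: the paper's entire proof is the sentence ``The proof is immediate.'' Your identification of $W=T_x\Fix H\cap T_xM$ and the observation that $A=D_xX_H$ annihilates all of $T_x\Fix H$ (since $X_H\equiv 0$ there) is exactly the immediate argument being alluded to.
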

As a corollary we see that the cusp in Figure~\ref{diagram} must correspond to degenerate RE. Indeed, since this cusp consists of singular values of \(\mu\) restricted to \(\Fix H\), they must also be singular values of the Casimir map restricted to \(\Fix H\). 
\begin{figure}
	\centering
	\includegraphics{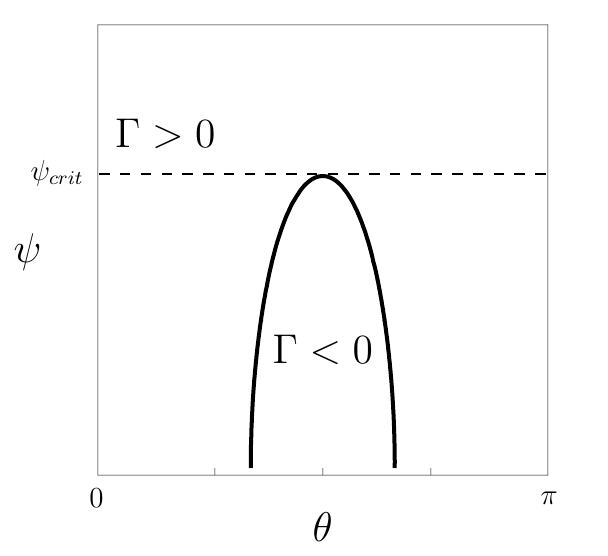}
	\caption{\label{curve} The curve \(\Gamma(\theta, \psi)=0\) of degenerate RE.}
\end{figure}

\begin{propn}
	A RE with separation \(\psi\) and phase \(\theta\) is degenerate if and only if
	\begin{equation}
	\Gamma(\theta, \psi)=
	(m_1+m_2)Z\cos 2\theta+\left(\frac{1+\tanh\psi}{2\cosh\psi}\right)(m_1+m_2\cosh 2\psi)(m_2+m_1\cosh 2\psi)
	\end{equation}
	is equal to zero. These RE correspond to the singular points along the cusp of the energy-Casimir diagram.
\end{propn}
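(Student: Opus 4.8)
The plan is to show that a RE is degenerate exactly when the Casimir $\|C_{tot}\|^2$ restricted to the surface $\Fix H$ of relative equilibria fails to be an immersion at the point $(\theta,\psi)$, and then to compute this locus; it will come out as $\{\Gamma=0\}$, and the remaining clause is automatic because the same locus controls where the energy-Casimir map degenerates. First I would apply Proposition~\ref{RE_result} to the full reduced space with $C=\|C_{tot}\|^2$ treated as a pair of real Casimirs: it gives that $\ker D_x(C|_{\Fix H})$ is fixed by the linearised reduced flow, so a RE is non-degenerate only if $C|_{\Fix H}$ is an immersion at $(\theta,\psi)$. The converse — non-degeneracy away from this locus — is precisely what the continuity argument piggy-backing on \cite{james} delivers for the reduced stability, so the degenerate RE are exactly the critical points of $C|_{\Fix H}$. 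Since a RE projects to an equilibrium of $X_H$ on its symplectic leaf, along $\Fix H$ one has $dH=a\,d(\Real C)+b\,d(\Imag C)$ with $a,b$ real-valued functions; hence $D(\mu|_{\Fix H})$ and $D(C|_{\Fix H})$ have equal rank, so the cusp in Figure~\ref{diagram} is the $\mu$-image of these same critical points.

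Next I would compute $\|C_{tot}\|^2$ on $\Fix H$. By \eqref{xy} the total momentum is $C_{tot}=L_1+L_2=(x_1+x_2)I$, so $\|C_{tot}\|^2=(x_1+x_2)^2$ (using $\|I\|^2=1$), and substituting the reconstructed values $x_s=im_s(\eta+\overline{\eta}\cosh 2\chi_s)$ from \eqref{x_and_y} gives $\|C_{tot}\|^2=-W^2$ with $W=(m_1+m_2)\eta+B\,\overline{\eta}$ and $B=m_1\cosh 2\chi_1+m_2\cosh 2\chi_2$; on $\Fix H$ we have $\eta=|\eta(\psi)|\,e^{i\theta}$ with $|\eta|,\chi_1,\chi_2$ functions of $\psi$ alone. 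Because $z\mapsto z^2$ is singular only at the origin and $W$ does not vanish for $\psi>0$ (there $B>m_1+m_2$), the map $C|_{\Fix H}$ is singular exactly where $W=|\eta|\bigl((m_1+m_2)e^{i\theta}+Be^{-i\theta}\bigr)$ is. Differentiating in $\theta$ and $\psi$, requiring the two partials to be real-linearly dependent, and clearing the exponentials via $e^{2i\theta}+e^{-2i\theta}=2\cos 2\theta$ collapses the condition to a single equation $\alpha(\psi)\cos 2\theta+\beta(\psi)=0$, with $\alpha,\beta$ built from $m_1+m_2$, $B$, $B'$ and $g'/g$ where $g=|\eta|^2$ and a prime is $d/d\psi$.

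It then remains to insert the data of the gravitational problem. From $\chi_1+\chi_2=\psi$ and $m_1\sinh 2\chi_1=m_2\sinh 2\chi_2=\zeta$ one gets $B'=2\zeta$ together with the identities $m_1+m_2\cosh 2\psi=(\cosh 2\chi_1)B$ and $m_2+m_1\cosh 2\psi=(\cosh 2\chi_2)B$, hence $(m_1+m_2\cosh 2\psi)(m_2+m_1\cosh 2\psi)=(\cosh 2\chi_1)(\cosh 2\chi_2)B^2$. The potential \eqref{potential} gives $f=dV/dz=m_1m_2/\sinh^3\psi$, and feeding this along with \eqref{zeta} and \eqref{Z_defn} into $g=|\eta|^2=f\sinh\psi/(2\zeta)$ makes $g$, and so $g'/g$, an explicit elementary function of $\psi$ in which all dependence on $V$ has cancelled. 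Rescaling $\alpha\cos 2\theta+\beta=0$ by $-Z/2\zeta$ turns the $\cos 2\theta$ coefficient into $(m_1+m_2)Z$ and, after simplification, turns the constant term into $\bigl(\tfrac{1+\tanh\psi}{2\cosh\psi}\bigr)(m_1+m_2\cosh 2\psi)(m_2+m_1\cosh 2\psi)$ — i.e.\ into $\Gamma(\theta,\psi)$ — which proves the equivalence; the cusp identification was already recorded in the first step.

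The hard part is this last reduction: one must keep the interlocking relations among $\chi_1,\chi_2,\psi,\zeta,Z$ and $|\eta|$ straight, and in particular verify that the potential-dependent pieces of $g'/g$ really cancel against $B'$ and $B^2-(m_1+m_2)^2$ so as to leave the clean, $V$-free $\Gamma$. By contrast the degeneracy criterion is formal and the differentiation in the two middle steps is purely mechanical.
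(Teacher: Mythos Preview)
Your approach is genuinely different from the paper's, and it has a gap in one direction of the biconditional.

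The paper does \emph{not} argue via the criterion ``degenerate $\Leftrightarrow$ $C|_{\Fix H}$ fails to be an immersion''. Instead it imports, from \cite{complex}, an explicit formula for the constant term $c_0$ of the characteristic polynomial of the linearised reduced flow on the symplectic leaf, substitutes the RE data $k_{11}=\|L_1\|^2$, $k_{22}=\|L_2\|^2$ computed from \eqref{x_and_y}, \eqref{zeta}, \eqref{Z_defn}, and obtains $c_0=-2e^{-\psi}\Gamma(\theta,\psi)\csch^6\psi$. Degeneracy is then literally $c_0=0$, so both implications come for free. The correspondence with the cusp is argued afterwards, separately, using connectedness of $\{\Gamma=0\}$.

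Your use of Proposition~\ref{RE_result} yields only one implication: if $C|_{\Fix H}$ is not immersive at $x$, then $\ker D_x(C|_{\Fix H})\ne 0$ lies in the kernel of the linearised flow, so the RE is degenerate. The converse is not a general fact: the kernel of $D^2H$ restricted to a symplectic leaf can strictly contain $T_x(\Fix H\cap M)$. Your justification for this converse---``the continuity argument piggy-backing on \cite{james}''---is circular as stated. The reference \cite{james} treats only $H^2$, i.e.\ the two lines $\theta\in\{0,\pi/2\}$; the extension to all $\theta$ by continuity is precisely the paper's subsequent stability theorem, and that theorem uses the present proposition to know in advance that eigenvalues can cross zero only on $\{\Gamma=0\}$. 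Without an independent computation of the linearisation (which is exactly what the paper takes from \cite{complex}), nothing excludes an additional component of the degenerate locus disjoint from $\{\Gamma=0\}$ and from the lines $\theta=0,\pi/2$.

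Your computation of $\|C_{tot}\|^2=-W^2$ on $\Fix H$ and of its critical locus is correct in outline and, if completed, would cleanly prove the second clause of the proposition (that $\{\Gamma=0\}$ is the cusp preimage)---arguably more directly than the paper's connectedness remark. But on its own it does not establish the first biconditional.
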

\begin{proof}
	Thanks to our formulation of the problem as the complexification of the 2-body problem on the sphere, the fully reduced equations of motion in \(\C^8\) coincide with those in Equation~2.10 of \cite{complex}. These may be linearised at a fixed-point in a symplectic leaf. A RE is degenerate if and only if the constant term of the characteristic polynomial of this linearisation is zero. This polynomial is derived in Equation~3.16 of \cite{complex}, however for a hyperbolic RE in \(\Fix H\) we must remember to replace every instance of a trigonometric function with its hyperbolic counterpart in \(\psi\). The constant term \(c_0\) is then given by
	\[\left(\frac{k_{11}}{m_1^2}-\frac{k_{22}}{m_2^2}\right)^2+2\coth\theta\csch^2\theta\left[\frac{k_{11}}{m_1}\left(1+\frac{m_2}{m_1}\right)+\frac{k_{22}}{m_2}\left(1+\frac{m_1}{m_2}\right)\right]+\left[(m_1+m_2)\coth\theta\csch^2\theta\right]^2.\]
	The expressions in \eqref{x_and_y} allow us to write 
	\[
	k_{11}=-2|\eta|^2m_1^2(\cos 2\theta+\cosh 2\chi_1),
	\]
	which can then be rewritten using \eqref{zeta} and \eqref{Z_defn} as
	\[
	k_{11}=-\frac{m_1^2}{2\sinh^3\psi\cosh\psi}\left[Ze^{-\psi}\cos 2\theta+\left(m_1\cosh 2\psi+m_2\right)\right].
	\]
	A similar expression holds for \(k_{22}\). If we substitute this into the constant term above we find that \(c_0=-2e^{-\psi}\Gamma(\theta,\psi)\csch^6\psi\). In Figure~\ref{curve} we illustrate the solutions to \(\Gamma(\theta,\psi)=0\). As this defines a connected curve in \((\theta,\psi)\)-space it must correspond exactly to the degenerate RE along the singular cusp in Figure~\ref{diagram}.
\end{proof}
	\subsection{The Energy-Casimir method}
If \(C=||C_{tot}||^2\) is real then by Proposition~\ref{momentum_crit} the orbit-reduced space may be identified with the space of \(SL_2\C\)-orbits which intersect
\begin{equation}
\label{set}
\left\{
(q_1,p_1,q_2,p_2)\in T^*H^2\times T^*H^2~|~{C}_{tot}\in\slin_2\R, ~||{C}_{tot}||^2=C
\right\}.
\end{equation}
	Observe that \(\{q_1,p_1,q_2,p_2\}\) spans the space of real, symmetric matrices in \(M_2(\C)\). From \eqref{hyperbolic_action} we see that the orbits contained within this space are equivalently the orbits of the subgroup \(SL_2\R\). Therefore, thanks to the isomorphism \(PSL_2\R\cong SO(1,2)\) the reduced space for \(C=||C_{tot}||^2\) real is identical to the orbit-reduced space for the 2-body problem on \(H^2\) with Casimir \(C\).
	
	The Hessian and linearisation of the reduced Hamiltonian for the problem on \(H^2\) is found in \cite{james}. We can therefore apply a continuity argument to extend their results to \(H^3\), noting that the behaviour of the fixed-points can only change when they cross a degenerate point whereupon an eigenvalue becomes zero.
	\begin{thm}
		The stability of a RE for the gravitational 2-body problem on \(H^3\) is determined by \(\Gamma(\theta, \psi)\) as follows:
		\begin{enumerate}
			\item For \(\Gamma<0\) the Hessian of the Hamiltonian at the fixed-point in reduced space is positive definite, and thus, the RE is Lyapunov stable.
			\item For \(\Gamma>0\) the Hessian at the fixed-point in reduced space has signature \((+++-)\), and thus, the RE is linearly unstable.
		\end{enumerate}
	In particular, a RE is unstable whenever  \(\cos 2\theta>0\), or whenever \(\psi>\psi_{crit}\).
	\end{thm}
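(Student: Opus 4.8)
The plan is to run the continuity argument flagged at the end of the previous subsection, feeding it the formula for $\Gamma$ from the previous proposition together with the planar results of \cite{james}. Recall that a relative equilibrium is a fixed point $x$ of the reduced flow on a symplectic leaf of the full reduced space $M^R/G^\rho$, which on the stable locus is a smooth $4$-dimensional real-symplectic manifold. Writing $\omega$ for the leaf symplectic form, the linearisation at $x$ is $L=\omega^{-1}\mathrm{Hess}_x H$, so $\det L$ and $\det\mathrm{Hess}_x H$ vanish together and share a sign; in particular $\mathrm{Hess}_x H$ degenerates exactly when the constant term $c_0$ of the characteristic polynomial of $L$ vanishes. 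Since the previous proposition gives $c_0=-2e^{-\psi}\Gamma(\theta,\psi)\csch^6\psi$, over the two-parameter family of Theorem~\ref{classification} the Hessian is nondegenerate precisely over the open set $\{\Gamma\neq 0\}$.

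Next I would argue that the inertia of $\mathrm{Hess}_x H$ is locally constant there: as $(\theta,\psi)$ varies, the fixed point, its leaf, and the restricted Hamiltonian vary smoothly over the stable locus (using that $\Fix H$ is an immersed submanifold and that the leaves assemble into the real-Poisson form), so $\mathrm{Hess}_x H$ is a continuous family of $4\times 4$ symmetric forms whose signature can jump only across $\{\Gamma=0\}$. The connected components of $\{\Gamma\neq 0\}$ can be read off from the displayed formula: for $\psi>0$ the coefficient $(m_1+m_2)Z$ of $\cos 2\theta$ is positive and the remaining summand is a product of positive factors, while on $[0,\pi)$ one has $\Gamma(\theta,\psi)=\Gamma(\pi-\theta,\psi)$ with $\cos 2\theta$ minimal ($=-1$) at $\theta=\pi/2$. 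Hence for each $\psi$ the slice $\{\theta:\Gamma<0\}$ is empty or a single interval about $\pi/2$; since $\Gamma(\pi/2,\psi)<0$ for small $\psi$ and tends to $+\infty$ as $\psi\to\infty$, the region $\{\Gamma<0\}$ is a single cell capped at $\psi_{crit}$, the largest root of $\Gamma(\pi/2,\cdot)$, and $\{\Gamma>0\}$ is the connected complement. This recovers the curve drawn in Figure~\ref{curve}.

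It then remains to pin down the inertia at one point of each cell, and here I would invoke \cite{james}. The hyperbolic RE ($\theta=0$) lies in $\{\Gamma>0\}$ because $\cos 0=1$; its Casimir is real, so by Proposition~\ref{momentum_crit} and $PSL_2\R\cong SO(1,2)$ its reduced system coincides with the planar one, where \cite{james} gives $\mathrm{Hess}_x H$ of signature $(+++-)$. The elliptic RE ($\theta=\pi/2$) at separation below $\psi_{crit}$ lies in $\{\Gamma<0\}$ and, equally being planar, has positive-definite $\mathrm{Hess}_x H$ by \cite{james} (indeed $\psi_{crit}$ is the planar critical separation). By local constancy these signatures hold throughout $\{\Gamma>0\}$ and $\{\Gamma<0\}$, giving (2) and (1); positive-definiteness makes $H$ a Lyapunov function, hence Lyapunov stability, while $\det\mathrm{Hess}_x H<0$ forces a real eigenvalue of $L$, hence linear instability. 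The final clause is a sign check: $\cos 2\theta\ge 0$ makes both summands of $\Gamma$ positive, and $\psi>\psi_{crit}$ gives $\Gamma(\theta,\psi)\ge\Gamma(\pi/2,\psi)>0$; either way $\Gamma>0$.

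The step I expect to be the true obstacle is the identification in the third paragraph: one must verify that when the Casimir is real the $H^3$ reduced Hamiltonian, its symplectic leaf, and its Hessian at the fixed point genuinely coincide with the objects analysed in \cite{james}, so that the imported signatures are the correct ones — the mass factors and the hyperbolic-for-trigonometric substitution make this bookkeeping error-prone. Secondary care is needed to justify the smoothness and continuity across leaves underlying the local-constancy step, and to confirm that $\Gamma(\pi/2,\cdot)$ has a single positive root so that $\psi_{crit}$ is unambiguous.
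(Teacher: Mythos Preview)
Your proposal is correct and follows essentially the same approach as the paper: the paper likewise identifies the real-Casimir slice with the reduced space of the planar problem via $PSL_2\R\cong SO(1,2)$, imports the Hessian signatures at the elliptic and hyperbolic RE from \cite{james}, and then extends by the continuity argument across the degenerate curve $\{\Gamma=0\}$. You in fact supply more detail than the paper itself, which states the theorem without a formal proof and relies on the two preceding paragraphs as its justification.
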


\newcommand{\Addresses}{{
		\bigskip
		\footnotesize
		
		P.~Arathoon, \textsc{University of Manchester}\par\nopagebreak
		\texttt{philip.arathoon@manchester.ac.uk}
	}}
%
%
%
\Addresses
\end{document}